\documentclass[12pt,reqno]{amsart}

\usepackage{mathrsfs,amsfonts,amsthm,amsmath,amssymb,thmtools}
\usepackage[most]{tcolorbox}

\usepackage{enumitem}                
\usepackage{lineno}                 

\usepackage[hyperindex, pdfencoding = auto, psdextra, bookmarksdepth = 4]{hyperref}  
\hypersetup{
    bookmarksopen      = true,
    bookmarksopenlevel = 1,
    bookmarksnumbered  = true,
    pdftitle    = {},
    pdfauthor   = {},
    linktoc     = page,
    anchorcolor = green,
    colorlinks  = true,        
    linkcolor   = black,
    citecolor   = blue,
    filecolor   = blue,
    urlcolor    = magenta,
}

\usepackage[nameinlink]{cleveref}    

\theoremstyle{plain}

\newtheorem{theorem}{Theorem}[section]
\newtheorem{lemma}[theorem]{Lemma}
\newtheorem{proposition}[theorem]{Proposition}

\theoremstyle{definition}   
\newtheorem{definition}[theorem]{Definition}
\newtheorem{remark}[theorem]{Remark}
\newtheorem{example}[theorem]{Example}

\newtheorem{innercustomthm}{Theorem}
\newenvironment{customthm}[1]
  {\renewcommand\theinnercustomthm{#1}\innercustomthm}
  {\endinnercustomthm}

\makeatletter
    \def\subsection{\@startsection{subsection}{2}%
    \z@{.5\linespacing\@plus.7\linespacing}{.3\linespacing}%
    {\normalfont\bfseries}}
\makeatother 

\makeatletter                                         
    \newcommand{\LeftEqNo}{\let\veqno\@@leqno}        
\makeatother

\numberwithin{equation}{section}                      

\begin{document}

\

\vspace{-2cm}

\title[Bargmann invariants and local unitary equivalence]{Bargmann invariants and local unitary equivalence}

\author{Minghui Ma}
\address{Minghui Ma, School of Mathematical Sciences, Dalian University of Technology, Dalian, 116024, China}
\email{minghuima@dlut.edu.cn}

\author{Rui Shi}
\address{Rui Shi, School of Mathematical Sciences, Dalian University of Technology, Dalian, 116024, China}
\email{ruishi@dlut.edu.cn}

\thanks{}

\keywords{Bargmann invariant, local unitary equivalence, partial trace}

\begin{abstract}
In this paper, we study the local unitary equivalence of quantum states on $\mathbb{C}^n\otimes\mathbb{C}^n$, which is an important notion in quantum information theory.
For the case $n=2$, the local unitary orbits of two-qubit states are completely determined by their local unitary Bargmann invariants.
We show that the local unitary Bargmann invariants do not form a complete set of invariants of local unitary orbits for $n\geqslant 3$, which negatively answers a problem proposed by L. Zhang and B. Xie.
\end{abstract}

\maketitle

\section{Introduction}

In quantum information theory, quantum states provide a fundamental mathematical framework for characterizing the physical state of a quantum system, encoding the probabilities, correlations, and fluctuations of all accessible observables.
However, when evaluating intrinsic non-local properties, not all mathematical distinctions among quantum states correspond to physically distinct quantum resources.
In particular, two states that differ only by a local unitary transformation possess identical entanglement spectra and correlation structures.
This motivates the study of local unitary equivalence, defined by the action of the local unitary group \cite{JFLLZ-15,Kra-10,Mak-02,ZXT-25}.

In a landmark result, Y. Makhlin \cite{Mak-02} explicitly constructed a complete set of 18 local unitary invariants for two-qubit states acting on $\mathbb{C}^2\otimes\mathbb{C}^2$.
Recently, L. Zhang, B. Xie, and Y. Tao \cite{ZXT-25} established a formal connection between Makhlin's fundamental invariants and local unitary Bargmann invariants.
Motivated by this relationship, L. Zhang and B. Xie conjectured in \cite[Section 8.2]{ZX-26} that the local unitary orbit of an arbitrary multipartite state is completely determined by its Bargmann invariants.
This conjecture is conceptually analogous to Kaplansky's second problem \cite{Ber-87,DH-90,Jia-04,KS-57,Kap-54,MRTZ-24}, which asserts that the global equivalence of $A\oplus A$ and $B\oplus B$ implies the local equivalence of $A$ and $B$ for certain equivalence relations.
Moreover, we observe that local unitary equivalence and Kaplansky's second problem are closely related.
The main purpose of this paper is to disprove this conjecture by constructing explicit counterexamples for two-qudit states on $\mathbb{C}^n\otimes\mathbb{C}^n$ for $n\geqslant 3$.

The Choi-Jamiołkowski isomorphism \cite{Choi-75,Jam-72} is one of the most fundamental and widely utilized tools in quantum information theory.
Let $\{E_{ij}\}_{1\leqslant i,j\leqslant m}$ be a system of matrix units in $M_m(\mathbb{C})$.
For a linear map $\Phi\colon M_m(\mathbb{C})\to M_n(\mathbb{C})$, the corresponding \emph{Choi matrix} $\rho_\Phi$ acting on the bipartite tensor product space $\mathbb{C}^m\otimes\mathbb{C}^n$ is defined as
\begin{equation}\label{equ Choi}
	\rho_\Phi=\sum_{1\leqslant i,j\leqslant m}E_{ij}\otimes\Phi(E_{ij})\in M_m(\mathbb{C})\otimes M_n(\mathbb{C})\cong M_{mn}(\mathbb{C}).
\end{equation}
It is well-known that $\Phi$ is a completely positive map if and only if $\rho_\Phi$ is a positive semi-definite matrix.
The map $\Phi$ is called a \emph{quantum channel} if it is completely positive and trace-preserving.

Let us introduce basic definitions and notation.
Let $D(\mathbb{C}^n)$ be the set of all positive semi-definite matrices $\rho$ in $M_n(\mathbb{C})$ satisfying that $\mathrm{Tr}(\rho)=1$.
Every matrix in $D(\mathbb{C}^n)$ is called a \emph{quantum state} or a \emph{density matrix}.
For every matrix $\rho_{AB}$ in $M_m(\mathbb{C})\otimes M_n(\mathbb{C})$, its \emph{partial traces} are denoted by
\begin{equation*}
  \rho_A=(\mathrm{Id}\otimes\mathrm{Tr})(\rho_{AB}),\quad
  \rho_B=(\mathrm{Tr}\otimes\mathrm{Id})(\rho_{AB}).
\end{equation*}
It is clear that $\rho_A\in D(\mathbb{C}^m)$ and $\rho_B\in D(\mathbb{C}^n)$ for every $\rho_{AB}\in D(\mathbb{C}^m\otimes\mathbb{C}^n)$.
For a linear map $\Phi\colon M_m(\mathbb{C})\to M_n(\mathbb{C})$, let $\rho_{AB}\in M_m(\mathbb{C})\otimes M_n(\mathbb{C})$ be the Choi matrix given by \eqref{equ Choi}.
Then $\rho_A=\sum_{1\leqslant i,j\leqslant m}\mathrm{Tr}(\Phi(E_{ij}))E_{ij}$ and $\rho_B=\Phi(I_m)$.
It follows that
\begin{enumerate}
\item[$(1)$] $\rho_A=\alpha I_m$ if and only if $\mathrm{Tr}(\Phi(X))=\alpha\mathrm{Tr}(X)$ for every $X\in M_n(\mathbb{C})$, i.e., $\Phi$ is trace-preserving up to a scaling factor;

\item[$(2)$] $\rho_B=\beta I_n$ if and only if $\Phi(I_m)=\beta I_n$, i.e., $\Phi$ is unital up to a scaling factor.
\end{enumerate}
If both partial traces of $\rho_{AB}\in D(\mathbb{C}^m\otimes\mathbb{C}^n)$ are scalar multiples of their respective identity matrices, then we say that $\rho_{AB}$ is a \emph{maximally mixed marginal state}, and the corresponding $\Phi$ is unital and trace-preserving up to scaling factors.

Let $\Psi=(\rho_1,\rho_2,\ldots,\rho_K)$ and $\Psi'=(\rho'_1,\rho'_2,\ldots,\rho'_K)$ be two tuples of quantum states in $D(\mathbb{C}^n)$.
We say that $\Psi$ and $\Psi'$ are \emph{unitarily equivalent} if there exists a unitary matrix $U\in M_n(\mathbb{C})$ such that $\rho'_j=U\rho_jU^*$ for every $1\leqslant j\leqslant K$.
For every sequence $j_1,j_2,\ldots,j_m\in\{1,2,\ldots,K\}$, the corresponding \emph{Bargmann invariant} is defined as
\begin{equation*}
	\Delta_{j_1,j_2,\ldots,j_m}(\Psi)=\mathrm{Tr}(\rho_{j_1}\rho_{j_2}\cdots\rho_{j_m}).
\end{equation*}
It is clear that unitarily equivalent tuples have the same Bargmann invariants.
Conversely, if all their Bargmann invariants coincide, then they are unitarily equivalent by \cite{Pro-76}.

\begin{definition}
Let $\rho_{AB}$ and $\sigma_{AB}$ be two bipartite states in $D(\mathbb{C}^m\otimes\mathbb{C}^n)$.
We say that $\rho_{AB}$ and $\sigma_{AB}$ are \emph{locally unitarily equivalent} if there are unitary matrices $U\in M_m(\mathbb{C})$ and $V\in M_n(\mathbb{C})$ such that
\begin{equation*}
  \sigma_{AB}=(U\otimes V)\rho_{AB}(U\otimes V)^*.
\end{equation*}
The \emph{local unitary Bargmann invariants} of $\rho_{AB}\in D(\mathbb{C}^m\otimes\mathbb{C}^n)$ are defined as the Bargmann invariants of the triple $(\rho_{AB},\rho_A\otimes I_n,I_m\otimes\rho_B)$.
\end{definition}

By definition, locally unitarily equivalent bipartite states have the same local unitary Bargmann invariants.
Conversely, the authors in \cite{ZXT-25} proved that the local unitary equivalence is completely determined by the local unitary Bargmann invariants for two-qubit states in $D(\mathbb{C}^2\otimes\mathbb{C}^2)$.
By \cite{Pro-76}, $\rho_{AB}$ and $\sigma_{AB}$ have the same local unitary Bargmann invariants if and only if the triples $(\rho_{AB},\rho_A\otimes I_n,I_m\otimes\rho_B)$ and $(\sigma_{AB},\sigma_A\otimes I_n,I_m\otimes\sigma_B)$ are unitarily equivalent.
Therefore, the authors in \cite[Section 8.2]{ZX-26} proposed the following problem:
For bipartite states $\rho_{AB}$ and $\sigma_{AB}$ in $D(\mathbb{C}^m\otimes\mathbb{C}^n)$, if there exists a unitary matrix $W$ in $M_m(\mathbb{C})\otimes M_n(\mathbb{C})$ such that
\begin{equation*}
	\begin{cases}
		\sigma_{AB} & =W\rho_{AB}W^*,\\
		\sigma_A\otimes I_n &=W(\rho_A\otimes I_n)W^*,\\
		I_m\otimes\sigma_B &=W(I_m\otimes\rho_B)W^*,
	\end{cases}
\end{equation*}
does it hold that $\sigma_{AB}$ and $\rho_{AB}$ are locally unitarily equivalent?

If we only focus on the last two relations
\begin{equation*}
	\sigma_A\otimes I_n=W(\rho_A\otimes I_n)W^*\quad\text{and}\quad I_m\otimes\sigma_B=W(I_m\otimes\rho_B)W^*,
\end{equation*}
then this problem degenerates to Kaplansky's second test problem up to unitary equivalence.
It follows from \cite{KS-57} that there are unitary matrices $U\in M_m(\mathbb{C})$ and $V\in M_n(\mathbb{C})$ such that
\begin{equation*}
	\sigma_A=U\rho_AU^*\quad\text{and}\quad\sigma_B=V\rho_BV^*.
\end{equation*}
But we do not have $\sigma_{AB}=(U\otimes V)\rho_{AB}(U\otimes V)^*$ in general.
In fact, we solve the above problem \cite[Section 8.2]{ZX-26} negatively as follows.

\begin{customthm}{\ref{thm main}}
For each $n\geqslant 3$, there are two-qudit states $\rho_{AB}$ and $\sigma_{AB}$ in $D(\mathbb{C}^n\otimes\mathbb{C}^n)$ and a unitary matrix $W$ in $M_n(\mathbb{C})\otimes M_n(\mathbb{C})$ such that
\begin{equation*}
	\begin{cases}
		\sigma_{AB} & =W\rho_{AB}W^*,\\
		\sigma_A\otimes I_n &=W(\rho_A\otimes I_n)W^*,\\
		I_n\otimes\sigma_B &=W(I_n\otimes\rho_B)W^*,
	\end{cases}
\end{equation*}
while $\sigma_{AB}$ and $\rho_{AB}$ are not locally unitarily equivalent.
\end{customthm}

As a direct consequence, for each $n\geqslant 3$, the local unitary equivalence cannot be completely determined by the local unitary Bargmann invariants for two-qudit states in $D(\mathbb{C}^n\otimes\mathbb{C}^n)$.
On the other hand, by \Cref{prop rank-one}, the local unitary equivalence of pure states in $D(\mathbb{C}^m\otimes\mathbb{C}^n)$ can be determined by the local unitary Bargmann invariants for all $m,n\geqslant 1$.

\section{The proof of main result}

We begin with a technical lemma that will be used in the proof of \Cref{lem 2-sigma}.

\begin{lemma}\label{lem P+Q-1}
Let $Q_1,Q_2$ be rank-one projections in $M_n(\mathbb{C})$.
Then for every $c\ne 0$, $cQ_1+Q_2-I_n$ is invertible in $M_n(\mathbb{C})$ if and only if $Q_1Q_2\ne 0$.
\end{lemma}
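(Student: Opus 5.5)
The plan is to reduce everything to a two-dimensional invariant subspace and compute a determinant there explicitly. Write $Q_1=uu^*$ and $Q_2=vv^*$ with unit vectors $u,v\in\mathbb{C}^n$, and set $t=|\langle u,v\rangle|^2$. Since $Q_1Q_2=\langle v,u\rangle\,uv^*$ (up to the inner-product convention), we have $Q_1Q_2\ne 0$ if and only if $t\ne 0$. So it suffices to show that $cQ_1+Q_2-I_n$ is invertible if and only if $ct\neq 0$; because $c\ne 0$, this is the same as $t\ne 0$.

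Let $S=\mathrm{span}\{u,v\}$. Both $S$ and $S^\perp$ are invariant under $T=cQ_1+Q_2-I_n$, and $T$ acts as $-I$ on $S^\perp$, which is invertible there. Hence $T$ is invertible if and only if its restriction $T|_S$ is invertible.

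\emph{Case $\dim S=1$.} Here $v=\lambda u$ with $|\lambda|=1$, so $Q_1=Q_2$ and $t=1$. On $S$ the operator $T$ acts as the scalar $(c+1)-1=c\ne 0$. Thus $T$ is invertible, and $Q_1Q_2=Q_1\ne0$, so the equivalence holds.

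\emph{Case $\dim S=2$.} Let $A=cQ_1+Q_2$ restricted to $S$. Then
\[
\operatorname{tr}A=c\,\mathrm{Tr}(Q_1)+\mathrm{Tr}(Q_2)=c+1 .
\]
For the determinant, fix an orthonormal basis of $S$ and let $M=[u\ v]$ be the $2\times 2$ coordinate matrix. Then $A=M\,\mathrm{diag}(c,1)\,M^*$, so
\[
\det A=c\,|\det M|^2=c\,\det(M^*M)=c(1-t),
\]
since $M^*M$ is the Gram matrix of $u,v$. The characteristic polynomial of $A$ evaluated at $1$ is therefore
\[
\det(I_S-A)=1-(c+1)+c(1-t)=-ct .
\]
Hence $T|_S=A-I_S$ is invertible if and only if $ct\ne 0$, i.e.\ if and only if $t\ne0$. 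This is exactly $Q_1Q_2\ne 0$.

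No step is a real obstacle. The only point needing care is the degenerate case $Q_1=Q_2$, where $S$ is one-dimensional and the $2\times2$ computation does not apply; it is handled separately above. The hypothesis $c\ne0$ enters only at the final step, and it is essential: for $c=0$ the operator $Q_2-I_n$ is never invertible.
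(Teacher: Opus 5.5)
Your proof is correct, but it takes a different route from the paper's. The paper uses a kernel vector directly. If $Q_1Q_2=0$, then $(cQ_1+Q_2-I_n)Q_2=0$, so the matrix is singular. Conversely, suppose a nonzero $x$ satisfies $cQ_1x=(I_n-Q_2)x$. If $Q_1x=0$, then $x=Q_2x$, so $\mathrm{ran}(Q_2)=\mathbb{C}x\subseteq\ker Q_1$. If $Q_1x\neq 0$, then $c\neq0$ gives $\mathrm{ran}(Q_1)\subseteq\mathrm{ran}(I_n-Q_2)$. Either way $Q_1Q_2=0$.

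You instead split $\mathbb{C}^n=S\oplus S^\perp$ with $S=\mathrm{span}\{u,v\}$. You then compute $\det(I_S-A)=1-\operatorname{tr}A+\det A=-ct$ using the Gram determinant.

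The paper's argument is shorter and coordinate-free, and it needs no separate treatment of $Q_1=Q_2$. Your argument needs that case split. In exchange it yields a quantitative identity, valid in both of your cases:
\[
\det(cQ_1+Q_2-I_n)=(-1)^{n-1}c\,|\langle u,v\rangle|^2 .
\]
This shows exactly how the matrix degenerates as the overlap $t$ tends to $0$, and makes the role of $c\neq 0$ transparent. You could also avoid the case split entirely by writing $cQ_1+Q_2=XDX^*$ with $X=[u\ v]$ and $D=\mathrm{diag}(c,1)$. Sylvester's identity $\det(I_n-XDX^*)=\det(I_2-DX^*X)$ then gives $-ct$ directly.
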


\begin{proof}
If $Q_1Q_2=0$, then $(cQ_1+Q_2-I_n)Q_2=0$, and hence $cQ_1+Q_2-I_n$ is not invertible in $M_n(\mathbb{C})$.
Conversely, suppose that $cQ_1+Q_2-I_n$ is not invertible.
Then there exists a nonzero vector $x\in\mathbb{C}^n$ such that $(cQ_1+Q_2-I_n)x=0$, i.e., $cQ_1x=(I_n-Q_2)x$.
If $Q_1x=0$, then $(I_n-Q_2)x=0$, i.e., $x=Q_2x$.
It follows that $\mathrm{ran}(Q_2)=\mathbb{C}x$.
Thus, $Q_1Q_2=0$.
If $Q_1x\ne 0$, then
\begin{equation*}
	\mathrm{ran}(Q_1)=\mathbb{C}Q_1x\subseteq\mathrm{ran}(I_n-Q_2).
\end{equation*}
It follows that $Q_1\leqslant I_n-Q_2$, and hence $Q_1Q_2=0$.
This completes the proof.
\end{proof}

Next we try to find maximally mixed marginal states in $D(\mathbb{C}^m\otimes\mathbb{C}^n)$.
Suppose that $\rho_{AB}$ is a self-adjoint matrix in $M_m(\mathbb{C})\otimes M_n(\mathbb{C})$ such that $\rho_A=\alpha I_m$ and $\rho_B=\beta I_n$, where $\alpha=\frac{1}{m}\mathrm{Tr}(\rho_{AB})$ and $\beta=\frac{1}{n}\mathrm{Tr}(\rho_{AB})$.
Choose $c>\|\rho_{AB}\|$ and let
\begin{equation*}
	\rho'_{AB}=\frac{1}{\mathrm{Tr}(\rho_{AB}+cI_m\otimes I_n)}(\rho_{AB}+cI_m\otimes I_n).
\end{equation*}
Then $\rho'_{AB}$ is a maximally mixed marginal state in $D(\mathbb{C}^m\otimes\mathbb{C}^n)$.
Therefore, we only need to focus on self-adjoint matrices in $M_m(\mathbb{C})\otimes M_n(\mathbb{C})$.

Suppose that $P_1,P_2$ are rank-one projections in $M_m(\mathbb{C})$, $B_1,B_2,B_3$ are matrices in $M_n(\mathbb{C})$, and
\begin{equation*}
	\rho_{AB}=P_1\otimes B_1+P_2\otimes B_2+I_m\otimes B_3\in M_m(\mathbb{C})\otimes M_n(\mathbb{C}).
\end{equation*}
By the definition of partial traces, we have
\begin{equation*}
	\rho_A=\mathrm{Tr}(B_1)P_1+\mathrm{Tr}(B_2)P_2+\mathrm{Tr}(B_3)I_m.
\end{equation*}
If we assume that $\mathrm{Tr}(B_1)=\mathrm{Tr}(B_2)=0$, then $\rho_A$ is a scalar multiple of $I_m$.
Inspired by \Cref{lem P+Q-1}, we take $B_1=cQ_1+Q_2-I_n$, where $Q_1,Q_2$ are rank-one projections in $M_n(\mathbb{C})$.
By the assumption
\begin{equation*}
	\mathrm{Tr}(B_1)=c+1-n=0,
\end{equation*}
we have $c=n-1$.
In other words, $B_1=(n-1)Q_1+Q_2-I_n$.
Similarly, we take $B_2=Q_1+(n-1)Q_2-I_n$.
Then
\begin{equation*}
	\rho_B=B_1+B_2+mB_3=n(Q_1+Q_2)-2I_n+mB_3.
\end{equation*}
Let $B_3=-\frac{n}{m}(Q_1+Q_2)$.
Then $\rho_B$ is a scalar multiple of $I_n$.
Moreover, $\rho_{AB}$ is a self-adjoint matrix in $M_m(\mathbb{C})\otimes M_n(\mathbb{C})$.
We summarize the above argument as follows.

\begin{example}\label{eg tensor}
Suppose that $P_1,P_2$ are rank-one projections in $M_m(\mathbb{C})$ and $Q_1,Q_2$ are rank-one projections in $M_n(\mathbb{C})$.
Let
\begin{equation*}
	\rho_{AB}=P_1\otimes B_1+P_2\otimes B_2+I_m\otimes B_3\in M_m(\mathbb{C})\otimes M_n(\mathbb{C}),
\end{equation*}
where
\begin{equation*}
	B_1=(n-1)Q_1+Q_2-I_n,\quad B_2=Q_1+(n-1)Q_2-I_n,\quad B_3=-\frac{n}{m}(Q_1+Q_2).
\end{equation*}
Then $\rho_{AB}$ is self-adjoint with partial traces $\rho_A=-\frac{2n}{m}I_m$ and $\rho_B=-2I_n$.
\end{example}

We present a well-known result as follows.
See \cite[Proposition 11.1.8]{KR2} for a proof.

\begin{lemma}\label{lem tensor=0}
Suppose that $\{A_j\}_{j=1}^k\subseteq M_m(\mathbb{C})$, $\{B_j\}_{j=1}^k\subseteq M_n(\mathbb{C})$, and
\begin{equation*}
	\sum_{j=1}^kA_j\otimes B_j=0.
\end{equation*}
If the set $\{A_j\}_{j=1}^k$ is linearly independent, then $B_j=0$ for each $1\leqslant j\leqslant k$.
\end{lemma}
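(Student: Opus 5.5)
Since $\{A_j\}_{j=1}^k$ is linearly independent in the finite-dimensional space $M_m(\mathbb{C})$, I will use dual functionals. Extend $\{A_j\}$ to a basis of $M_m(\mathbb{C})$, or simply invoke the standard fact that there are linear functionals $\varphi_1,\dots,\varphi_k$ on $M_m(\mathbb{C})$ with $\varphi_i(A_j)=\delta_{ij}$. Concretely, one may take $\varphi_i(X)=\mathrm{Tr}(C_i^*X)$ for suitable $C_i$ in the span of the $A_j$. These $C_i$ are obtained by inverting the Gram matrix $G=(\mathrm{Tr}(A_i^*A_j))_{i,j}$ with respect to the Hilbert--Schmidt inner product. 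This $G$ is invertible precisely because the $A_j$ are linearly independent.

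Next, for each $i$, apply the slice map $\varphi_i\otimes\mathrm{Id}\colon M_m(\mathbb{C})\otimes M_n(\mathbb{C})\to M_n(\mathbb{C})$. This map is well defined and linear, and it is determined on elementary tensors by $X\otimes Y\mapsto\varphi_i(X)Y$. Applying it to $\sum_{j=1}^kA_j\otimes B_j=0$ gives
\begin{equation*}
0=\sum_{j=1}^k\varphi_i(A_j)B_j=B_i,
\end{equation*}
which holds for each $1\leqslant i\leqslant k$.

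The only point needing care is the existence and well-definedness of the slice map $\varphi_i\otimes\mathrm{Id}$. I will handle it via the identification $M_m(\mathbb{C})\otimes M_n(\mathbb{C})\cong M_m(M_n(\mathbb{C}))$, writing $\varphi_i(X)=\sum_{p,q}c^{(i)}_{pq}x_{pq}$ and setting $(\varphi_i\otimes\mathrm{Id})\big((Y_{pq})_{p,q}\big)=\sum_{p,q}c^{(i)}_{pq}Y_{pq}$, which is linear and agrees with the required formula on elementary tensors. Everything else is routine linear algebra, so I expect no real obstacle.
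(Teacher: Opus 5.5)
Your proof is correct. The paper gives no argument of its own for this lemma and only cites Kadison and Ringrose (Proposition 11.1.8), so your dual-functional and slice-map argument is a valid self-contained replacement. Your Gram-matrix construction of the $C_i$ and your block-matrix definition of $\varphi_i\otimes\mathrm{Id}$ are both sound.

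A slightly shorter variant slices the other tensor factor, which removes the need for biorthogonal functionals. For an arbitrary linear functional $\psi$ on $M_n(\mathbb{C})$, apply $\mathrm{Id}\otimes\psi$, defined blockwise by $(Y_{pq})_{p,q}\mapsto(\psi(Y_{pq}))_{p,q}$. This gives $\sum_{j}\psi(B_j)A_j=0$. Linear independence of the $A_j$ then forces $\psi(B_j)=0$ for every $\psi$, hence $B_j=0$. This route uses the hypothesis directly and needs no Gram matrix. Your route, by contrast, yields an explicit left inverse $X\mapsto(\varphi_i\otimes\mathrm{Id})(X)$ that recovers each $B_i$. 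Either argument suffices for the paper's applications.
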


In the following technical lemma, we provide a sufficient condition to distinguish two matrices of the form given by \Cref{eg tensor}.

\begin{lemma}\label{lem 2-sigma}
Let $E_1,E_2,E'_1,E'_2$ be rank-one projections in $M_m(\mathbb{C})$, $F_1,F_2,F'_1,F'_2$ be rank-one projections in $M_n(\mathbb{C})$, and
\begin{equation*}
	\sigma_{AB}=E_1\otimes D_1+E_2\otimes D_2+I_m\otimes D_3,\quad
	\sigma'_{AB}=E'_1\otimes D'_1+E'_2\otimes D'_2+I_m\otimes D'_3,
\end{equation*}
where
\begin{equation*}
	\begin{cases}
		D_1=(n-1)F_1+F_2-I_n,\quad D_2=F_1+(n-1)F_2-I_n,\quad D_3=-(F_1+F_2),\\
		D'_1=(n-1)F'_1+F'_2-I_n,\quad D'_2=F'_1+(n-1)F'_2-I_n,\quad D'_3=-(F'_1+F'_2).
	\end{cases}
\end{equation*}
If $m\geqslant 3$, $n\geqslant 3$ and
\begin{equation*}
	E_1E_2=0,\quad F_1F_2\ne F_2F_1,\quad E'_1E'_2\ne E'_2E'_1,\quad F'_1F'_2=0,
\end{equation*}
then $\sigma_{AB}\ne\sigma'_{AB}$.
\end{lemma}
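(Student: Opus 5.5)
We argue by contradiction: assume $\sigma_{AB}=\sigma'_{AB}$ and extract scalar compressions on the first tensor factor. For a unit vector $x\in\mathbb{C}^m$, apply the slice map $X\otimes Y\mapsto \langle Xx,x\rangle Y$ to both sides. This yields
\begin{equation*}
	\langle E_1x,x\rangle D_1+\langle E_2x,x\rangle D_2+D_3=\langle E'_1x,x\rangle D'_1+\langle E'_2x,x\rangle D'_2+D'_3 .
\end{equation*}
Since $m\geqslant 3$, we may choose $x$ orthogonal to the ranges of $E'_1$ and $E'_2$, which together span at most two dimensions. With $a=\langle E_1x,x\rangle\geqslant 0$ and $b=\langle E_2x,x\rangle\geqslant 0$, the identity becomes
\begin{equation*}
	\bigl((n-1)a+b-1\bigr)F_1+\bigl(a+(n-1)b-1\bigr)F_2-(a+b)I_n=-(F'_1+F'_2).
\end{equation*}
Because $F'_1F'_2=0$, the right-hand side is minus a rank-two projection. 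Equivalently, with $\alpha=(n-1)a+b-1$ and $\beta=a+(n-1)b-1$, we get $\alpha F_1+\beta F_2=(a+b)I_n-(F'_1+F'_2)$. The left side has rank at most $2$, and the right side has eigenvalue $a+b$ with multiplicity $n-2$ and eigenvalue $a+b-1$ with multiplicity $2$.

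We then split on the value of $a+b$, using the noncommutativity $F_1F_2\neq F_2F_1$ throughout. Write $F_1=uu^*$ and $F_2=vv^*$; noncommutativity means $0<|\langle u,v\rangle|<1$.
\begin{enumerate}
\item[(i)] If $a+b=0$, then $a=b=0$ and $F_1+F_2=F'_1+F'_2$. But $F_1+F_2$ has nonzero eigenvalues $1\pm|\langle u,v\rangle|$, whereas $F'_1+F'_2$ is a projection. This is a contradiction.
\item[(ii)] If $a+b\notin\{0,1\}$, the right side is invertible, yet the left side has rank at most $2<n$ for $n\geqslant 3$. This is a contradiction.
\item[(iii)] If $a+b=1$, then $\alpha F_1+\beta F_2=I_n-F'_1-F'_2$ is a projection of rank $n-2$. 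This forces $n\in\{3,4\}$. For $n=4$ it would have to be the projection onto $\mathrm{span}\{u,v\}$ with $\alpha=\beta$; this is impossible, since $\alpha(F_1+F_2)$ is not a multiple of a projection when $u,v$ are neither parallel nor orthogonal. For $n=3$, one of $\alpha,\beta$ vanishes and, say, $F_1=I_3-F'_1-F'_2$, so $\alpha=1$.
\end{enumerate}

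The residual case in (iii), with $n=3$, is the main obstacle. There the single compression does not yet give a contradiction, so we will exploit the remaining freedom. We repeat the argument with the roles swapped: choose $y\perp\mathrm{ran}(E_1)+\mathrm{ran}(E_2)$. This gives $D_3=a'D'_1+b'D'_2+D'_3$, which can be analysed the same way now that $F'_1F'_2=0$ is explicit. The case bookkeeping is routine; the delicate point is ruling out these compatible residual solutions.

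A backup for the residual case, which I expect to be the real crux, is to use \Cref{lem tensor=0} directly. Move everything to one side of $\sigma_{AB}=\sigma'_{AB}$. Since $E_1E_2=0$ while $E'_1,E'_2$ do not commute, the set $\{E_1,E_2,E'_1,E'_2,I_m\}$ should have only very restricted linear dependencies. Passing to a linearly independent subfamily then forces a relation such as $D_1=0$ or $D_1=\lambda D'_j$. By \Cref{lem P+Q-1}, $D_1=(n-1)F_1+F_2-I_n$ is invertible because $F_1F_2\ne0$, whereas $D'_1$ is not invertible because $F'_1F'_2=0$. This invertibility mismatch is what finally distinguishes the two sides.
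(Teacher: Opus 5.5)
Your compressed identity and cases (i) and (ii) are correct. The case $n=4$ of (iii) does end in a contradiction, though not via $\alpha=\beta$: if $\alpha F_1+\beta F_2$ is the projection onto $\mathrm{span}\{u,v\}$, applying it to $u$ forces $(\alpha,\beta)=(1,0)$, while applying it to $v$ forces $(\alpha,\beta)=(0,1)$.

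The genuine gap is the case $n=3$, $a+b=1$. Here the single compressed identity is consistent, with $(a,b)=(1,0)$ and $F_1=I_3-F'_1-F'_2$, or the symmetric alternative. When $m=3$, the vector $x$ is unique up to a phase, because $\mathrm{ran}(E'_1)\neq\mathrm{ran}(E'_2)$. So that step has no further freedom to exploit. You leave this case open and only name two possible remedies. In one you call the bookkeeping ``routine''; in the other you say the linear dependencies ``should'' be restricted. Since $n=3$ is covered by the statement, the proof as proposed is incomplete.

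Both remedies can be finished.

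\textbf{Swapped compression.} Take a unit vector $y\perp\mathrm{ran}(E_1)+\mathrm{ran}(E_2)$. This gives $-(F_1+F_2)=D_3=a'D'_1+b'D'_2+D'_3$, so $F_1+F_2$ lies in $\mathrm{span}\{I_n,F'_1,F'_2\}$. That span is a commutative algebra because $F'_1F'_2=0$. In the residual case it also contains $F_1=I_3-F'_1-F'_2$ (or $F_2$). So $F_1$ commutes with $F_1+F_2$, hence with $F_2$, a contradiction.

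\textbf{Backup via \Cref{lem tensor=0}.} This is precisely the paper's proof, but your sketch omits the one configuration where \Cref{lem tensor=0} yields no relation $D_1=\lambda D'_j$, namely $E'_1,E'_2\in\mathrm{span}\{I_m,E_1,E_2\}$. The paper handles it by a dimension count on $\mathcal{A}=\mathrm{span}\{I_m,E_1,E_2,E'_1,E'_2\}$:
\begin{enumerate}
\item[$\bullet$] $\{I_m,E_1,E_2\}$ is linearly independent because $E_1E_2=0$ and $m\geqslant 3$.
\item[$\bullet$] $\dim\mathcal{A}=3$ is impossible, since $\mathrm{span}\{I_m,E_1,E_2\}$ is commutative while $E'_1E'_2\neq E'_2E'_1$.
\item[$\bullet$] $\dim\mathcal{A}=4$ gives $D_1\in\mathbb{C}D'_j$ for some $j$, and $\dim\mathcal{A}=5$ gives $D_1=0$. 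Both are excluded by \Cref{lem P+Q-1}, since $D_1$ is invertible and $D'_1,D'_2$ are not.
\end{enumerate}
That argument treats all $n\geqslant 3$ uniformly, so the eigenvalue case analysis of your compression is not needed at all.
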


\begin{proof}
Suppose to the contrary that $\sigma_{AB}=\sigma'_{AB}$.
Then
\begin{equation*}
	E_1\otimes D_1+E_2\otimes D_2-E'_1\otimes D'_1-E'_2\otimes D'_2+I_m\otimes(D_3-D'_3)=0.
\end{equation*}
Let $\mathcal{A}=\mathrm{span}\{I_m,E_1,E_2,E'_1,E'_2\}$.
Since $E_1$ and $E_2$ are rank-one projections in $M_m(\mathbb{C})$ with $E_1E_2=0$ and $m\geqslant 3$, the set $\{I_m,E_1,E_2\}$ is linearly independent.
It follows that $3\leqslant\dim\mathcal{A}\leqslant 5$.
If $\dim\mathcal{A}=3$, then both $E'_1$ and $E'_2$ are linear combinations of $\{I_m,E_1,E_2\}$.
It follows that $E'_1E'_2=E'_2E'_1$, a contradiction.
If $\dim\mathcal{A}=4$, then without loss of generality, we may assume that $\mathcal{A}=\mathrm{span}\{I_m,E_1,E_2,E'_1\}$, i.e., we can write
\begin{equation*}
	E'_2=aI_m+bE_1+cE_2+dE'_1.
\end{equation*}
It follows that
\begin{align*}
	E_1\otimes(D_1-bD'_2)+E_2\otimes(D_2-cD'_2) & +E'_1\otimes(-D'_1-dD'_2)\\
	&+I_m\otimes(D_3-D'_3-aD'_2)=0.
\end{align*}
By \Cref{lem tensor=0}, we have $D_1=bD'_2$.
That is a contradiction because $D_1$ is invertible and $D'_2$ is not invertible by \Cref{lem P+Q-1}.
If $\dim\mathcal{A}=5$, then $D_1=(n-1)F_1+F_2-I_n=0$ by \Cref{lem tensor=0}, contradicting the assumption $n\geqslant 3$.
This completes the proof.
\end{proof}

The map $\Phi$ in the following lemma is called \emph{Sakai's flip} \cite{Sak-75}.

\begin{lemma}\label{lem flip}
Let $\Phi$ be a linear map on $M_n(\mathbb{C})\otimes M_n(\mathbb{C})$ such that $\Phi(A\otimes B)=B\otimes A$ for all $A,B\in M_n(\mathbb{C})$.
Then there exists a unitary matrix $W$ in $M_n(\mathbb{C})\otimes M_n(\mathbb{C})$ such that $\Phi(X)=WXW^*$ for all $X\in M_n(\mathbb{C})\otimes M_n(\mathbb{C})$.
\end{lemma}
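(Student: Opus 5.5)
The plan is to write $W$ down explicitly as the swap unitary and check that conjugation by it implements $\Phi$. Let $\{e_i\}_{i=1}^n$ be the standard basis of $\mathbb{C}^n$ and $\{E_{ij}\}$ the matrix units of $M_n(\mathbb{C})$. Define $W$ on $\mathbb{C}^n\otimes\mathbb{C}^n$ by $W(x\otimes y)=y\otimes x$, extended linearly; concretely,
\begin{equation*}
	W=\sum_{1\leqslant i,j\leqslant n}E_{ij}\otimes E_{ji}.
\end{equation*}
Since $W$ permutes the orthonormal basis $\{e_i\otimes e_j\}$, it is unitary. Moreover $W^2=I_n\otimes I_n$, so $W^*=W^{-1}=W$.

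Next, verify the identity $W(A\otimes B)W^*=B\otimes A$ for all $A,B\in M_n(\mathbb{C})$. It suffices to check it on elementary tensors $x\otimes y$, which span $\mathbb{C}^n\otimes\mathbb{C}^n$. Using $W^*=W$,
\begin{equation*}
	W(A\otimes B)W^*(x\otimes y)=W(A\otimes B)(y\otimes x)=W(Ay\otimes Bx)=Bx\otimes Ay=(B\otimes A)(x\otimes y).
\end{equation*}

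Finally, both $\Phi$ and $X\mapsto WXW^*$ are linear on $M_n(\mathbb{C})\otimes M_n(\mathbb{C})$, and they agree on the elementary tensors $A\otimes B$. These span the whole space; for example, the $E_{ij}\otimes E_{kl}$ already form a basis. Hence the two maps coincide.

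There is no substantial obstacle. The only care needed is the bookkeeping that $W^*=W$, so that the conjugation sends $x\otimes y$ to $y\otimes x$ before $A\otimes B$ is applied.
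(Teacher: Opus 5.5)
Your proposal is correct and matches the paper's proof, which also takes $W$ to be the swap unitary $e_i\otimes e_j\mapsto e_j\otimes e_i$; the paper adds the abstract remark that every automorphism of $M_{n^2}(\mathbb{C})$ is inner. Your explicit check that $W(A\otimes B)W^*=B\otimes A$, followed by extension by linearity, supplies the verification the paper leaves implicit.
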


\begin{proof}
It is clear that $\Phi$ is an automorphism of $M_n(\mathbb{C})\otimes M_n(\mathbb{C})\cong M_{n^2}(\mathbb{C})$.
Since every automorphism of $M_{n^2}(\mathbb{C})$ is inner, we can find such a unitary matrix $W$.
More precisely, let $\{e_j\}_{j=1}^n$ be an orthonormal basis for $\mathbb{C}^n$.
Then $W$ can be taken as the unitary matrix sending $e_i\otimes e_j$ to $e_j\otimes e_i$ for all $1\leqslant i,j\leqslant n$.
\end{proof}

Now we prove our main result in this paper.

\begin{theorem}\label{thm main}
For each $n\geqslant 3$, there are two-qudit states $\rho_{AB}$ and $\sigma_{AB}$ in $D(\mathbb{C}^n\otimes\mathbb{C}^n)$ and a unitary matrix $W$ in $M_n(\mathbb{C})\otimes M_n(\mathbb{C})$ such that
\begin{equation*}
	\begin{cases}
		\sigma_{AB} & =W\rho_{AB}W^*,\\
		\sigma_A\otimes I_n &=W(\rho_A\otimes I_n)W^*,\\
		I_n\otimes\sigma_B &=W(I_n\otimes\rho_B)W^*,
	\end{cases}
\end{equation*}
while $\sigma_{AB}$ and $\rho_{AB}$ are not locally unitarily equivalent.
\end{theorem}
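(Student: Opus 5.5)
We construct $\rho_{AB}$ from \Cref{eg tensor} with $m=n$, and take $\sigma_{AB}$ to be its image under Sakai's flip. Fix rank-one projections $P_1,P_2\in M_n(\mathbb{C})$ with $P_1P_2=0$, and rank-one projections $Q_1,Q_2\in M_n(\mathbb{C})$ with $Q_1Q_2\ne Q_2Q_1$. The second choice is possible for $n\geqslant 2$: take two non-orthogonal, non-parallel unit vectors. Let
\begin{equation*}
	\tau=P_1\otimes B_1+P_2\otimes B_2+I_n\otimes B_3
\end{equation*}
be as in \Cref{eg tensor}; since $m=n$, we have $B_3=-(Q_1+Q_2)$. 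Then $\tau$ is self-adjoint with $\tau_A=-2I_n$ and $\tau_B=-2I_n$. Pick $c>\|\tau\|$ and set
\begin{equation*}
	\rho_{AB}=\frac{\tau+cI_n\otimes I_n}{\mathrm{Tr}(\tau+cI_n\otimes I_n)} .
\end{equation*}
As noted before \Cref{eg tensor}, this is a maximally mixed marginal state, with $\rho_A=\rho_B=\frac1nI_n$.

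Let $W$ be the flip unitary from \Cref{lem flip}, and put $\sigma_{AB}=W\rho_{AB}W^*$. Concretely, $\sigma_{AB}$ is obtained from $\rho_{AB}$ by swapping tensor factors. Swapping factors of a maximally mixed marginal state again gives one, so $\sigma_A=\sigma_B=\frac1nI_n$. The second and third relations in the theorem then reduce to $\frac1n I=W(\frac1n I)W^*$, which holds trivially. The first relation holds by definition.

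It remains to show that $\rho_{AB}$ and $\sigma_{AB}$ are not locally unitarily equivalent. Suppose instead that $\sigma_{AB}=(U\otimes V)\rho_{AB}(U\otimes V)^*$. Undo the normalization and the shift by $cI$; both are preserved by conjugation and flip, and the two traces agree. This gives
\begin{equation*}
	\Phi(\tau)=(U\otimes V)\tau(U\otimes V)^*,
\end{equation*}
where $\Phi$ is the flip. The right-hand side equals
\begin{equation*}
	E_1\otimes D_1+E_2\otimes D_2+I\otimes D_3,
\end{equation*}
where $E_i=UP_iU^*$ and $F_i=VQ_iV^*$. These satisfy $E_1E_2=0$ and $F_1F_2\ne F_2F_1$.

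For the left-hand side, note that $I_n\otimes B_3=-I\otimes(Q_1+Q_2)$. Expand $\tau$ as
\begin{equation*}
	\tau=(n-1)P_1\otimes Q_1+P_1\otimes Q_2-P_1\otimes I+P_2\otimes Q_1+(n-1)P_2\otimes Q_2-P_2\otimes I-I\otimes Q_1-I\otimes Q_2 .
\end{equation*}
Flipping gives
\begin{equation*}
	\Phi(\tau)=(n-1)Q_1\otimes P_1+Q_2\otimes P_1-I\otimes P_1+\cdots .
\end{equation*}
Regroup this in the required form with $E'_i=Q_i$ and $F'_i=P_i$. That form is
\begin{equation*}
	Q_1\otimes\bigl((n-1)P_1+P_2-I\bigr)+Q_2\otimes\bigl(P_1+(n-1)P_2-I\bigr)-I\otimes(P_1+P_2).
\end{equation*}
Expanding it yields exactly the eight terms of $\Phi(\tau)$: the term $-Q_1\otimes I$ comes from $-I\otimes Q_1$ flipped, and $-I\otimes P_1$ comes from $-P_1\otimes I$ flipped. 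So $\Phi(\tau)=\sigma'_{AB}$ in the notation of \Cref{lem 2-sigma}, with $E'_1E'_2=Q_1Q_2\ne Q_2Q_1$ and $F'_1F'_2=P_1P_2=0$.

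Since $m=n\geqslant3$, \Cref{lem 2-sigma} gives $\sigma_{AB}\ne\sigma'_{AB}$, which contradicts the displayed equality. Hence $\rho_{AB}$ and $\sigma_{AB}$ are not locally unitarily equivalent.

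The main obstacle is the symmetric regrouping of the flipped expression into the form of \Cref{lem 2-sigma}. It works only because $m=n$ makes the coefficient $n/m$ of $B_3$ equal to $1$, so that the $I\otimes(\cdot)$ and $(\cdot)\otimes I$ terms trade places consistently. The rest is bookkeeping.
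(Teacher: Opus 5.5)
Your proposal is correct and follows the paper's proof essentially verbatim: the same matrix from \Cref{eg tensor} with $m=n$, Sakai's flip to produce $\sigma_{AB}$, and the same regrouping of $\Phi(\tau)$, so that \Cref{lem 2-sigma} (with $E_i=UP_iU^*$, $F_i=VQ_iV^*$, $E'_i=Q_i$, $F'_i=P_i$) gives the contradiction. The only difference is the order of steps. You shift and normalize first, then explicitly undo this affine map to reduce to $\Phi(\tau)=(U\otimes V)\tau(U\otimes V)^*$; this spells out the transfer step the paper leaves implicit when it passes from the self-adjoint pair to $(\rho'_{AB},\sigma'_{AB})$.
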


\begin{proof}
Let $P_1,P_2,Q_1,Q_2$ be rank-one projections in $M_n(\mathbb{C})$ such that
\begin{equation*}
	P_1P_2=0,\quad Q_1Q_2\ne Q_2Q_1.
\end{equation*}
Let $\rho_{AB}=P_1\otimes B_1+P_2\otimes B_2+I_n\otimes B_3$, where
\begin{equation*}
	B_1=(n-1)Q_1+Q_2-I_n,\quad B_2=Q_1+(n-1)Q_2-I_n,\quad B_3=-(Q_1+Q_2).
\end{equation*}
Let $\sigma_{AB}=\Phi(\rho_{AB})$, where $\Phi$ is given by \Cref{lem flip}.
Then $\sigma_{AB}=W\rho_{AB}W^*$.
As in \Cref{eg tensor}, we see that
\begin{equation*}
	\sigma_A=\sigma_B=\rho_A=\rho_B=-2I_n.
\end{equation*}
Thus, it is clear that $\sigma_A\otimes I_n=W(\rho_A\otimes I_n)W^*$ and $I_n\otimes\sigma_B=W(I_n\otimes\rho_B)W^*$.
We claim that $\sigma_{AB}$ and $\rho_{AB}$ are not locally unitarily equivalent.
Suppose to the contrary that there are unitary matrices $U$ and $V$ in $M_n(\mathbb{C})$ such that
\begin{equation*}
	\sigma_{AB}=(U\otimes V)\rho_{AB}(U\otimes V)^*.
\end{equation*}
Let $E_j=UP_jU^*$ and $F_j=VQ_jV^*$ for $j=1,2$.
Then
\begin{equation*}
	\sigma_{AB}=E_1\otimes D_1+E_2\otimes D_2+I_n\otimes D_3,
\end{equation*}
where
\begin{equation*}
	D_1=(n-1)F_1+F_2-I_n,\quad D_2=F_1+(n-1)F_2-I_n,\quad D_3=-(F_1+F_2).
\end{equation*}
On the other hand, since $\sigma_{AB}=\Phi(\rho_{AB})$, we have
\begin{align*}
	\sigma_{AB}
	&=B_1\otimes P_1+B_2\otimes P_2+B_3\otimes I_n\\
	&=Q_1\otimes D'_1+Q_2\otimes D'_2+I_n\otimes D'_3,
\end{align*}
where
\begin{equation*}
	D'_1=(n-1)P_1+P_2-I_n,\quad D'_2=P_1+(n-1)P_2-I_n,\quad D'_3=-(P_1+P_2).
\end{equation*}
We get a contradiction by \Cref{lem 2-sigma}.

Note that the above $\rho_{AB}$ and $\sigma_{AB}$ are self-adjoint matrices in $M_n(\mathbb{C})\otimes M_n(\mathbb{C})$, but not two-qudit states in $D(\mathbb{C}^n\otimes\mathbb{C}^n)$.
Note that $\|\rho_{AB}\|\leqslant 2n+2$, $\mathrm{Tr}(\rho_{AB})=-2n$, and $\mathrm{Tr}(\rho_{AB}+(2n+2)I_n\otimes I_n)=2n(n^2+n-1)$.
Let
\begin{equation*}
	\rho'_{AB}=\frac{1}{2n(n^2+n-1)}(\rho_{AB}+(2n+2)I_n\otimes I_n).
\end{equation*}
Then $\rho'_{AB}$ is a maximally mixed marginal state in $D(\mathbb{C}^n\otimes\mathbb{C}^n)$.
Let $\sigma'_{AB}=\Phi(\rho'_{AB})$.
Then the pair $(\rho'_{AB},\sigma'_{AB})$ has the desired properties.
\end{proof}

\begin{remark}\label{rem main}
Suppose that $d=\mathrm{gcd}(m,n)\geqslant 3$.
If we write $m=dm_1$ and $n=dn_1$, then $\mathbb{C}^m=\mathbb{C}^d\otimes\mathbb{C}^{m_1}$ and $\mathbb{C}^n=\mathbb{C}^d\otimes\mathbb{C}^{n_1}$.
In this case, we can also find a pair $(\rho_{AB},\sigma_{AB})$ with the desired properties as in \Cref{thm main}.
\end{remark}

For the case $n\geqslant 4$, we provide a simplified example as follows.

\begin{example}
Assume that $n\geqslant 4$.
Let $P_1$ and $P_2$ be projections in $M_n(\mathbb{C})$ such that $\mathrm{Tr}(P_1)=1$ and $\mathrm{Tr}(P_2)=2$.
Let
\begin{equation*}
  \rho_{AB}=\frac{1}{2n}P_1\otimes P_2+\frac{1}{2n(n-1)}(I_n-P_1)\otimes(2I_n-P_2).
\end{equation*}
Then $\rho_{AB}$ is a two-qudit state in $D(\mathbb{C}^n\otimes\mathbb{C}^n)$ such that
\begin{equation*}
	\rho_A=\frac{1}{n}P_1+\frac{1}{n}(I_n-P_1)=\frac{1}{n}I_n,\quad
	\rho_B=\frac{1}{2n}P_2+\frac{1}{2n}(2I_n-P_2)=\frac{1}{n}I_n.
\end{equation*}
Similarly, let
\begin{equation*}
  \sigma_{AB}=\frac{1}{2n}P_2\otimes P_1+\frac{1}{2n(n-1)}(2I_n-P_2)\otimes(I_n-P_1).
\end{equation*}
Then $\sigma_A=\sigma_B=\frac{1}{n}I_n$.
By \Cref{lem flip}, $\sigma_{AB}$ and $\rho_{AB}$ are unitarily equivalent maximally mixed marginal states in $D(\mathbb{C}^n\otimes\mathbb{C}^n)$.

Suppose to the contrary that $\sigma_{AB}$ and $\rho_{AB}$ are locally unitarily equivalent.
Then there are unitary matrices $U_1$ and $U_2$ in $M_n(\mathbb{C})$ such that
\begin{equation*}
  \sigma_{AB}=(U_1\otimes U_2)\rho_{AB}(U_1\otimes U_2).
\end{equation*}
Let $Q_j=U_jP_jU_j^*$ for $j=1,2$.
Then
\begin{equation*}
	(n-1)P_2\otimes P_1+(2I_n-P_2)\otimes(I_n-P_1)
	=(n-1)Q_1\otimes Q_2+(I_n-Q_1)\otimes(2I_n-Q_2).
\end{equation*}
It follows that
\begin{equation*}
	P_2\otimes(nP_1-I_n)+Q_1\otimes(2I_n-nQ_2)+I_n\otimes(Q_2-2P_1)=0.
\end{equation*}
Since $\mathrm{rank}(Q_1)=1$, $\mathrm{rank}(P_2)=2$, and $n\geqslant 4$, the set $\{Q_1,P_2,I_n\}$ is linearly independent.
It follows that $nP_1-I_n=0$ by \Cref{lem tensor=0}.
That is a contradiction.
\end{example}

At the end of this paper, we show that for pure states in $D(\mathbb{C}^m\otimes\mathbb{C}^n)$, i.e., rank-one projections in $M_m(\mathbb{C})\otimes M_n(\mathbb{C})$, the local unitary equivalence is determined by the local unitary Bargmann invariants.

\begin{proposition}\label{prop rank-one}
Suppose that $\rho_{AB}$ and $\sigma_{AB}$ are two bipartite states in $D(\mathbb{C}^m\otimes\mathbb{C}^n)$ and $W$ is a unitary matrix in $M_m(\mathbb{C})\otimes M_n(\mathbb{C})$ such that
\begin{equation*}
	\begin{cases}
		\sigma_{AB} & =W\rho_{AB}W^*,\\
		\sigma_A\otimes I_n &=W(\rho_A\otimes I_n)W^*,\\
		I_m\otimes\sigma_B &=W(I_m\otimes\rho_B)W^*.
	\end{cases}
\end{equation*}
If $\rho_{AB}$ is a pure state, then $\sigma_{AB}$ and $\rho_{AB}$ are locally unitarily equivalent.
\end{proposition}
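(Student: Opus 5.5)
The plan is to use the Schmidt decomposition together with the fact that the partial traces of a pure state are determined, up to local unitaries, by its Schmidt coefficients.

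First, since $\sigma_{AB}=W\rho_{AB}W^*$ and $\rho_{AB}$ is a rank-one projection, $\sigma_{AB}$ is also a rank-one projection. Write $\rho_{AB}=xx^*$ and $\sigma_{AB}=yy^*$ for unit vectors $x,y\in\mathbb{C}^m\otimes\mathbb{C}^n$.

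Next, take Schmidt decompositions
\begin{equation*}
	x=\sum_{k=1}^r\sqrt{\lambda_k}\,u_k\otimes v_k,\qquad y=\sum_{k=1}^s\sqrt{\mu_k}\,u'_k\otimes v'_k,
\end{equation*}
where $\lambda_1\geqslant\cdots\geqslant\lambda_r>0$, $\mu_1\geqslant\cdots\geqslant\mu_s>0$, and $\{u_k\},\{v_k\},\{u'_k\},\{v'_k\}$ are orthonormal families. Then
\begin{equation*}
	\rho_A=\sum_k\lambda_k u_ku_k^*,\quad \rho_B=\sum_k\lambda_k v_kv_k^*,\quad \sigma_A=\sum_k\mu_k u'_ku_k'^*,\quad \sigma_B=\sum_k\mu_k v'_kv_k'^*.
\end{equation*}

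The second relation gives $\sigma_A\otimes I_n=W(\rho_A\otimes I_n)W^*$. So $\sigma_A\otimes I_n$ and $\rho_A\otimes I_n$ have the same spectrum with multiplicities. The multiplicity of each eigenvalue of $\rho_A\otimes I_n$ is $n$ times its multiplicity in $\rho_A$, so $\rho_A$ and $\sigma_A$ have the same spectrum with multiplicities. (This is the easy case of the Kaplansky-type cancellation, or one can cite \cite{KS-57}.) Hence $r=s$ and $\lambda_k=\mu_k$ for all $k$.

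Now define a unitary $U\in M_m(\mathbb{C})$ with $Uu_k=u'_k$ for $1\leqslant k\leqslant r$, extended arbitrarily on the orthogonal complements. Define $V\in M_n(\mathbb{C})$ with $Vv_k=v'_k$ in the same way. Then $(U\otimes V)x=y$, so
\begin{equation*}
	\sigma_{AB}=(U\otimes V)\rho_{AB}(U\otimes V)^*,
\end{equation*}
which is the claimed local unitary equivalence.

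None of these steps is a real obstacle. The one point needing care is justifying the equality of Schmidt coefficients. This uses only the spectra of the marginals: the relation for $\sigma_A$ alone suffices, and the $\rho_B$ relation is not even needed, since the nonzero spectra of $\rho_A$ and $\rho_B$ agree for a pure state. Note that $W$ is used only to transfer the rank-one property and the spectral data; it is not itself required to be a local unitary.
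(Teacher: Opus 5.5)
Your proof is correct and follows the paper's route: a Schmidt decomposition, equality of the Schmidt coefficients from the spectral equality forced by $\sigma_A\otimes I_n=W(\rho_A\otimes I_n)W^*$, and then local unitaries $U,V$ mapping one set of Schmidt vectors onto the other. You also supply details the paper leaves as ``routine'': dividing the eigenvalue multiplicities by $n$, and constructing $U$ and $V$ explicitly so that $(U\otimes V)x=y$.
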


\begin{proof}
Suppose that $\rho_{AB}$ is the rank-one projection onto $\mathbb{C}\xi$, where $\xi$ is a unit vector in $\mathbb{C}^m\otimes\mathbb{C}^n$.
By the Schmidt decomposition, there are orthonormal sets $\{e_j\}_{j=1}^{k}\subseteq\mathbb{C}^m$, $\{f_j\}_{j=1}^{k}\subseteq\mathbb{C}^n$, and scalars $\lambda_1\geqslant\lambda_2\geqslant\cdots\geqslant\lambda_k>0$ with $1=\sum_{j=1}^{k}\lambda_j$ such that
\begin{equation*}
	\xi=\sum_{j=1}^{k}\sqrt{\lambda_j}e_j\otimes f_j.
\end{equation*}
For every $1\leqslant i,j\leqslant k$, we define rank-one matrices $E_{ij}\in M_m(\mathbb{C})$ and $F_{ij}\in M_n(\mathbb{C})$ by $E_{ij}x=\langle x,e_j\rangle e_i$ and $F_{ij}y=\langle y,f_j\rangle f_i$, respectively.
Then
\begin{equation*}
	\rho_{AB}=\sum_{i,j=1}^{k}\sqrt{\lambda_i\lambda_j}E_{ij}\otimes F_{ij},\quad
	\rho_A=\sum_{j=1}^{k}\lambda_jE_{jj},\quad
	\rho_B=\sum_{j=1}^{k}\lambda_jF_{jj}.
\end{equation*}
Since $\sigma_{AB}=W\rho_{AB}W^*$ is a rank-one projection, we can similarly write
\begin{equation*}
	\sigma_{AB}=\sum_{i,j=1}^{k'}\sqrt{\lambda'_i\lambda'_j}E'_{ij}\otimes F'_{ij},\quad
	\sigma_A=\sum_{j=1}^{k'}\lambda'_jE'_{jj},\quad
	\sigma_B=\sum_{j=1}^{k'}\lambda'_jF'_{jj}.
\end{equation*}
Since $\sigma_A\otimes I_n=W(\rho_A\otimes I_n)W^*$, we see that $k'=k$ and $\lambda'_j=\lambda_j$ for every $1\leqslant j\leqslant k$.
From this, it is routine to see that $\sigma_{AB}$ and $\rho_{AB}$ are locally unitarily equivalent.
\end{proof}



\end{document}